\documentclass[letterpaper, 10 pt, conference]{ieeeconf}  

\IEEEoverridecommandlockouts                              
\usepackage{graphics} 
\usepackage{epsfig} 
\usepackage{amsmath} 
\usepackage{amssymb}  

\usepackage{tikz}
\usepackage{eso-pic}
\usepackage{xcolor}

\makeatletter
\let\NAT@parse\undefined
\makeatother
\usepackage{hyperref}
\hypersetup{
	colorlinks=false,
	linkcolor=black,
	urlcolor=black,
	hidelinks
}

\renewcommand\fbox{\fcolorbox{black}{gray!10}}

\newcommand\acceptedmanuscripttext{%
	\scriptsize
	\textbf{\textcopyright 2025 IEEE.} Personal use of this material is permitted. Permission from IEEE must be obtained for all other uses, in any current or future media, including reprinting/republishing this material for advertising or promotional purposes, creating new collective works, for resale or redistribution to servers or lists, or reuse of any copyrighted component of this work in other works.\\

	\textbf{This manuscript was presented at the \textit{2025 IEEE 19th International Conference on Control \& Automation (ICCA)}}.
	The final Version of Record is available at \url{https://doi.org/10.1109/ICCA65672.2025.11129781}.
	Please cite the published version.
}

\AddToShipoutPicture*{%
	\AtPageUpperLeft{%
		\begin{tikzpicture}[remember picture,overlay]
			\node[
			anchor=north,
			yshift=-7pt
			] at (current page.north)
			{%
				\fbox{%
					\parbox{\dimexpr\textwidth-2\fboxsep-2\fboxrule}{%
						\acceptedmanuscripttext
					}%
				}%
			};
		\end{tikzpicture}
	}%
}

\title{\LARGE \bf
	Orbital Station-Keeping in the Earth-Moon System via~Nonlinear~Backstepping
}

\author{António Nunes, Pedro Batista, and Sérgio Brás
	\thanks{This work was supported by LARSyS FCT funding (DOI: 10.54499/LA/P/0083/2020, 10.54499/UIDP/50009/2020, and 10.54499/UIDB/50009/2020), and by NEURASPACE Project, Contract No. 9, under Regulation (EU) 2021/241 of the European Parliament and of the Council of February 12, 2021 and the Portuguese Recovery and Resilience Program (PRR), in component 05-Capitalization and Business Innovation, under Notice No. 425 01/C05-i01/2021 of the Regulation of Mobilizing Agendas/Alliances for reindustrialization}
	\thanks{António Nunes is with the Institute for Systems and Robotics, Instituto Superior Técnico, Universidade de Lisboa, Portugal (e-mail: {\tt \footnotesize \href{mailto:antonio.w.nunes@tecnico.ulisboa.pt}{antonio.w.nunes@tecnico.ulisboa.pt}}) }
	\thanks{Pedro Batista is with the Institute for Systems and Robotics,
		Instituto Superior Técnico, Universidade de Lisboa, Portugal}%
	\thanks{Sérgio Brás is with the European Space Agency, The Netherlands}%
}

\newtheorem{corollary}{Corollary}[section]

\newtheorem{proposition}{Proposition}[section]

\newcommand{\norm}[1]{|\!|#1|\!|}

\begin{document}

\maketitle
\thispagestyle{empty}
\pagestyle{empty}

\begin{abstract}
	
A nonlinear orbital station-keeping solution for the circular and elliptic versions of the Earth-Moon Restricted Three-Body Problem (R3BP) is developed via a backstepping technique. Formal guarantees for global asymptotic stability (GAS) are attained, as shown through Lyapunov's stability theory. The adequacy of the proposed control law is evaluated through the means of numerical trials over closed periodic solutions of the circular and elliptic R3BPs. The ramifications of the control gain choice are carefully studied and simulated.
	
\end{abstract}

\section{Introduction}
	
Recently, there has been a significant commitment to returning mankind to the lunar surface, following the efforts of NASA's Artemis program and other initiatives alike. To this end, the Restricted Three-Body Problem (R3BP) is known to provide an adequate first approximation to the dynamics of a spacecraft in the Earth-Moon system. In fact, closed periodic solutions around its five Lagrange (equilibrium) points have been identified as ideal trajectories for contemporary missions within this scope, due to their favorable conditions in terms of positioning with respect to the Moon and communication clearance with the Earth. Despite their appeal, these orbits are typically unstable, meaning that control action is necessary if a spacecraft is to track such a trajectory. These control solutions are usually developed with the intent of driving deviations from a target point evolving along the nominal trajectory to zero, which formally constitutes an orbital \textit{station-keeping} problem.

In the literature, this problem has been addressed primarily through a linearization of the equations of motion (EoM) about a target equilibrium solution. The linear approach to the orbital station-keeping problem has historically been preferred since it allows for a juxtaposition with well-founded fields within linear theory. Renowned examples include the use of concepts from Floquet theory on linear periodic systems to develop feedback control of the unstable (linear) orbital modes \cite{wiesel1983ModalControl}, and invariant manifold theory \cite{simo1986StationkeepingInvManif}, for a \textit{looser} control. The ease of including a measure of optimality in the design of linear control laws has also contributed to their preference over other alternatives. Examples range from more classical \textit{target-point} approaches \cite{howell1993StationkeepingPointMethod}, where the projected spacecraft deviation is weighted against current correction maneuvers, to the more modern implementations of Model Predictive Control (MPC) found in \cite{Elobaid2022MPC} and \cite{CuevasDelValle2022ModernControl}.

Despite the benefits outlined, the absence of formal guarantees for global asymptotic stability (GAS) under the full, more realistic, nonlinear (NL) model of the R3BP dynamics is a clear disadvantage of linear strategies. As such, some work has been devoted to the development of alternative NL solutions. Recent approaches of relevance include the use of output regulation theory \cite{akiyama2018outputRegulation} or sliding mode control \cite{darvish2015slidingMode} for station-keeping and formation flying. However, there is a gap in the literature for a NL approach that is easily adaptable to dynamical models of varying fidelity, while maintaining a computationally light profile -- which is especially relevant in space applications, where numerical power is limited.

This work proposes a novel NL control law that is developed via a standard and lightweight backstepping technique.  For the sake of versatility, its design and application are covered for the two most common dynamical models for spacecraft motion in the Earth-Moon system: the circular and elliptic R3BPs. The proposed solution is shown to guarantee GAS when targeting any orbit within both problems, while keeping computational demands at bay.

The remainder of this paper is organized as follows. Firstly, the dynamics of the two R3BP formulations are introduced in Section~\ref{sec:dyn}, highlighting key differences and equilibrium (periodic) solutions of interest. Then, the controller design is outlined in Section~\ref{sec:cont}, where GAS is formally shown. The proposed solution is thereafter evaluated numerically in Section~\ref{sec:results} and its adequacy is explored for various choices of control gains. Finally, the most meaningful conclusions and possible future work are discussed in Section~\ref{sec:conc}.

\section{Dynamics}
\label{sec:dyn}

In its most general definition, the R3BP models the dynamics of a small mass (e.g., a spacecraft), that feels the gravitational pull of two much more massive bodies in its proximity, named primaries, but has a negligible contribution to their motion. In this regard, the trajectories of the primaries are known to lie on a limited set of possibilities \cite{szebehely1967TheoryOfOrbits}. Of particular interest is the case of Keplerian orbits around the shared barycenter, which provides an adequate approximation for the Earth-Moon relative motion. Within this scope, two scenarios may be considered:
(i) the simpler case, where the orbits of the primaries are further approximated to being circular or (ii) the more general case, where elliptical orbits are considered instead, providing a more realistic, albeit more complex, representation of the primaries' motion. When the third body is introduced, these two cases give rise to the Circular Restricted Three-Body Problem (CR3BP) and Elliptic Restricted Three-Body Problem (ER3BP), respectively.

In both versions of the R3BP, the dynamics of the third body are established under a \textit{synodic} reference frame that aligns both primaries along $X$ and rotates to match their angular speed, directed along $Z$. The EoM are scaled to bring the distance between primaries to unity and their orbital period to $2\pi$, resulting in unit mean angular speed. In addition, their masses are scaled to $m_1=1-\mu$ and $m_2=\mu$ (considering $m_1\geq m_2$), with $\mu$ a problem-specific constant -- approximately $\mu\approx0.01215$, for the Earth-Moon system. This procedure fixes the positions of the primaries with mass $m_1$ and $m_2$ to $(-\mu,0,0)$ and $(1-\mu,0,0)$, respectively.
	
%

\subsection{Circular Restricted Three-Body Problem}

	As explored in \cite{wiesel2010SpaceflightDynamics}, under the proposed scaling and reference frame, with coordinates $(x,y,z)$, an effective potential in the CR3BP may be established as $U = \frac{x^2+y^2}{2} + \frac{1-\mu}{r_1} + \frac{\mu}{r_2}$, where $r_1$ and $r_2$ are the scalar distances between the third body and $m_1$ and $m_2$, respectively. Adopting the dot notation to represent time derivatives, and assuming no sensor noise nor external perturbations, the resulting third body EoM are
\begin{equation}
	\label{eq:eom_matrix_form}
	\begin{aligned}
		\dot{\mathbf{r}} &= \mathbf{I_3}\mathbf{v},\\
		\dot{\mathbf{v}} &= \boldsymbol{\Omega}\mathbf{v} + \nabla U(\mathbf{r}),
	\end{aligned}
	\qquad \text{with} \qquad 
	\boldsymbol{\Omega}=\begin{bmatrix}
		0 & 2 & 0 \\ -2 & 0 & 0 \\ 0 & 0 & 0
	\end{bmatrix},
\end{equation}
where ${\mathbf{r}^T=\begin{bmatrix}
		x & y & z
\end{bmatrix}}$ is its position, ${\mathbf{v}^T=\begin{bmatrix}
\dot{x} & \dot{y} & \dot{z}
\end{bmatrix}}$ is its velocity, $\mathbf{I_3}$ is the $3\times3$ identity matrix, and $\nabla U(\mathbf{r})$ is the gradient of the effective potential, evaluated at $\mathbf{r}$.

\subsection{Elliptic Restricted Three-Body Problem}
\label{subsec:er3bp}
	
In the ER3BP, it is important to highlight that the angular velocity of the primaries and the distance between them are not constant. Ultimately, this means that the synodic reference frame rotates at non-constant angular speed and pulsates to keep the coordinates of the primaries fixed. Furthermore, it is common to adopt the \textit{true anomaly}\footnote{The true anomaly is the angular position, in radians, of the primaries along their orbits, with respect to the major axis. The primaries are at their closest at \textit{periapsis}, for $\nu=0$, and at their farthest at \textit{apoapsis}, for $\nu=\pi$.} of the primaries, $\nu$, as the independent variable of the problem, instead of time \cite{szebehely1967TheoryOfOrbits}. In this sense, representing quantities expressed in the synodic, pulsating reference frame of the ER3BP by $\bar{(~\cdot~)}$, an equivalent effective potential may be established as $\bar U = \frac{r}{p} \left( \frac{\bar x^2 + \bar y^2}{2} + \frac{1-\mu}{\bar r_1} + \frac{\mu}{\bar r_2} - \frac{1}{2} e \cos(\nu) \bar z^2 \right)$, where ${r=\frac{p}{1+e\cos(\nu)}}$ is the dimensionless distance between primaries, $e$ is the eccentricity of their orbits, and ${p := (1-e^2)}$. Adopting $(~\cdot~)'$ to denote derivatives with respect to $\nu$, the resulting EoM of the third body are thus
\begin{equation}
	\label{eq:eom_matrix_form_er3bp}
	\begin{aligned}
		\bar{\mathbf{r}}' &= \mathbf{I_3}\bar{\mathbf{v}},\\
		\bar{\mathbf{v}}' &= \boldsymbol{\Omega}\bar{\mathbf{v}} + \nabla \bar{U}(\bar{\mathbf{r}},\nu),
	\end{aligned}
	\quad \text{with} \qquad
	\begin{aligned}
		\bar{\mathbf{r}}^T &=\begin{bmatrix}
			\bar{x} & \bar{y} & \bar{z}
		\end{bmatrix},\\
		\bar{\mathbf{v}}^T &=\begin{bmatrix}
			\bar{x}' & \bar{y}' & \bar{z}'
		\end{bmatrix}.
	\end{aligned}
\end{equation}
which, in contrast with \eqref{eq:eom_matrix_form}, are non-autonomous.

\subsection{Lagrange Points and Periodic Orbits}

Despite differing in complexity, EoM \eqref{eq:eom_matrix_form} and \eqref{eq:eom_matrix_form_er3bp} admit the same five equilibrium points, so-called Lagrange points, which are ordered $L_1$ through $L_5$ by increasing potential \cite{szebehely1967TheoryOfOrbits}. These lie on the orbital plane of the primaries ($X$--$Y$) such that $L_1$, $L_2$, and $L_3$ are colinear with the $X$ direction, while $L_4$ and $L_5$ form equilateral triangles with the two bodies.

While some missions target the Lagrange points directly, the vast majority consider closed periodic orbits around them, given their variety and versatility. In fact, Poincaré's conjecture \cite{poincare1892MethodesNouvelles} shows the existence of an infinite amount of closed periodic solutions in the CR3BP, which is not reflected in the ER3BP, given the non-autonomous nature of its EoM.

\section{Controller Design}
\label{sec:cont}
To design the control solutions, actuation is considered in terms of accelerations, where independent action over the three base directions is assumed. The exposition of its design for both R3BP dynamical models is here briefly approached.

\subsection{Nonlinear Control Law in the CR3BP}
To exploit the integral nature of the EoM \eqref{eq:eom_matrix_form} and \eqref{eq:eom_matrix_form_er3bp}, a backstepping procedure is employed in the design of a NL control law. Identifying quantities along a nominal orbit with $(~\cdot~)^*$, control is introduced for a deviation-based version of the dynamics. For the CR3BP, this corresponds to
\begin{equation}
	\label{eq:z_1_z_2_dynamics}
		\dot{\mathbf{z}}_1 = \mathbf{z}_2 \qquad \text{and} \qquad
		\dot{\mathbf{z}}_2  = \mathbf{f}_a(\mathbf{z}_1,\mathbf{z}_2) + \mathbf{u},
\end{equation}
where $\mathbf{z}_1=\mathbf{r}-\mathbf{r}^*$ and $\mathbf{z}_2=\mathbf{v}-\mathbf{v}^*$ are the position and velocity deviations from the nominal orbit, respectively, $\mathbf{f}_a(\mathbf{z}_1,\mathbf{z}_2)  =  \boldsymbol{\Omega}\mathbf{z}_2  + \nabla U(\mathbf{z}_1+\mathbf{r}^*) -\nabla U(\mathbf{r}^*)$ is the deviation-induced acceleration error, and $\mathbf{u}$ is the control variable.

In light of a typical backstepping procedure, $\mathbf{z}_2$ is interpreted as a \textit{virtual} input to the dynamics of $\mathbf{z}_1$, in \eqref{eq:z_1_z_2_dynamics}. To this end, the state-feedback law $\mathbf{z}_2=\boldsymbol{\phi}(\mathbf{z}_1)=-\mathbf{K}_1\mathbf{z}_1$, where $\mathbf{K}_1\succ 0$, is pursued, and the Lyapunov candidate function
\begin{equation*}
	V(\mathbf{z}_1) = \frac{1}{2}\mathbf{z}_1^T\mathbf{z}_1,~ \text{with}~ V(\mathbf{0})=0~ \text{and}~ V(\mathbf{z}_1)>0,\forall \mathbf{z}_1\neq \mathbf{0},
\end{equation*}
is defined. Given that $V(\mathbf{z}_1)$ is radially unbounded and that
\begin{equation*}
	\dot{V}(\mathbf{z}_1)=\nabla V^T(\mathbf{z}_1)\boldsymbol{\phi}(\mathbf{z}_1) = -\mathbf{z}_1^T\mathbf{K}_1\mathbf{z}_1,
\end{equation*}
satisfies $\dot{V}(\mathbf{z}_1)<0,~\forall \mathbf{z}_1\neq \mathbf{0}$, the state-feedback law $\boldsymbol{\phi}(\mathbf{z}_1)$ ensures GAS of the origin of the $\mathbf{z}_1$ dynamics, following the Barbashin-Krasovskii theorem \cite[Theorem~4.2]{khalil2001NLSystems}, an extension to Lyapunov's direct method  \cite[Theorem~4.1]{khalil2001NLSystems}.

Through the change of coordinates $ \mathbf{z}_2 = \boldsymbol{\zeta} +\boldsymbol{\phi}(\mathbf{z}_1)$, the differential equations \eqref{eq:z_1_z_2_dynamics} lead to the cascade connection
\begin{equation*}
	\dot{\mathbf{z}}_1 = \boldsymbol{\zeta} + \boldsymbol{\phi}(\mathbf{z}_1) \qquad \text{and} \qquad \dot{\boldsymbol{\zeta}} = \boldsymbol{\upsilon},
\end{equation*}
where $\boldsymbol{\upsilon} := \mathbf{u} + \mathbf{f}_a(\mathbf{z}_1,\mathbf{z}_2) - \dot{\boldsymbol{\phi}}(\mathbf{z}_1)$. These dynamics are similar to the original form, with the difference being that the origin of the first component is asymptotically stable when ${\boldsymbol{\zeta}=\mathbf{0}}$.  This is exploited in the design of $\boldsymbol{\upsilon}$ to stabilize the entire system, namely by proposing the composite Lyapunov candidate function $V_c(\mathbf{z}_1,\boldsymbol{\zeta})=V(\mathbf{z}_1)+\frac{1}{2}\boldsymbol{\zeta}^T\boldsymbol{\zeta}$. Its derivative, \begin{equation*}
	\dot{V}_c(\mathbf{z}_1,\boldsymbol{\zeta}) = \nabla{V}^T(\mathbf{z}_1)\left[\boldsymbol{\zeta} + \boldsymbol{\phi}(\mathbf{z}_1)\right] + \boldsymbol{\zeta}^T\boldsymbol{\upsilon} ,
\end{equation*}
hints towards the control law ${\boldsymbol{\upsilon} = -\nabla V(\mathbf{z}_1) - \mathbf{K}_2\boldsymbol{\zeta}}$, with $\mathbf{K}_2\succ 0$. Thus, by reverting the change of coordinates and expanding, one gets to the following result in terms of $\mathbf{u}$.

\begin{proposition}
	\label{prop:control_law_NL_CR3BP}
	The  NL control law
	\begin{equation}
		\label{eq:NL_backstepping_controller_law}
		\mathbf{u} = -(\mathbf{I_3}+\mathbf{K}_2\mathbf{K}_1)\mathbf{z}_1-(\mathbf{K}_1+\mathbf{K}_2)\mathbf{z}_2 - \mathbf{f}_a(\mathbf{z}_1,\mathbf{z}_2),
	\end{equation}
	with gain matrices $\mathbf{K}_1\succ0$ and $\mathbf{K}_2\succ0$, guarantees GAS of the origin of the CR3BP EoM \eqref{eq:z_1_z_2_dynamics}, in the sense of Lyapunov.
\end{proposition}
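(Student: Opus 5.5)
The plan is to substitute the control law \eqref{eq:NL_backstepping_controller_law} into the deviation dynamics \eqref{eq:z_1_z_2_dynamics} and analyze the resulting closed loop in the backstepping coordinates $(\mathbf{z}_1,\boldsymbol{\zeta})$, with $\boldsymbol{\zeta}=\mathbf{z}_2+\mathbf{K}_1\mathbf{z}_1$, using the composite function $V_c$.

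The first step is to check that \eqref{eq:NL_backstepping_controller_law} is exactly the choice $\boldsymbol{\upsilon}=-\nabla V(\mathbf{z}_1)-\mathbf{K}_2\boldsymbol{\zeta}$. We have $\nabla V(\mathbf{z}_1)=\mathbf{z}_1$ and $\dot{\boldsymbol{\phi}}(\mathbf{z}_1)=-\mathbf{K}_1\dot{\mathbf{z}}_1=-\mathbf{K}_1\mathbf{z}_2$. Hence
\begin{equation*}
\mathbf{u}=\boldsymbol{\upsilon}-\mathbf{f}_a+\dot{\boldsymbol{\phi}}=-\mathbf{z}_1-\mathbf{K}_2(\mathbf{z}_2+\mathbf{K}_1\mathbf{z}_1)-\mathbf{K}_1\mathbf{z}_2-\mathbf{f}_a(\mathbf{z}_1,\mathbf{z}_2),
\end{equation*}
which regroups into \eqref{eq:NL_backstepping_controller_law}. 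The closed loop then becomes the linear time-invariant system
\begin{equation*}
\dot{\mathbf{z}}_1=-\mathbf{K}_1\mathbf{z}_1+\boldsymbol{\zeta}, \qquad \dot{\boldsymbol{\zeta}}=-\mathbf{z}_1-\mathbf{K}_2\boldsymbol{\zeta},
\end{equation*}
because the nonlinear term $\mathbf{f}_a$ is cancelled exactly. This cancellation requires $\mathbf{r}^*$ to stay away from the primaries, so that $\nabla U$ is defined along the nominal orbit. I would flag this as the implicit domain caveat: strictly, ``global'' means global on the set where $\mathbf{r}=\mathbf{z}_1+\mathbf{r}^*$ avoids the singularities of $U$.

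Next I would compute
\begin{equation*}
\dot V_c=\mathbf{z}_1^T(\boldsymbol{\zeta}-\mathbf{K}_1\mathbf{z}_1)+\boldsymbol{\zeta}^T(-\mathbf{z}_1-\mathbf{K}_2\boldsymbol{\zeta})=-\mathbf{z}_1^T\mathbf{K}_1\mathbf{z}_1-\boldsymbol{\zeta}^T\mathbf{K}_2\boldsymbol{\zeta}.
\end{equation*}
The cross terms cancel, so $\dot V_c$ is negative definite. Since $V_c$ is positive definite and radially unbounded in $(\mathbf{z}_1,\boldsymbol{\zeta})$, the Barbashin--Krasovskii theorem \cite[Theorem~4.2]{khalil2001NLSystems} gives GAS of the origin in these coordinates. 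Because the closed loop is autonomous, the time dependence of $\mathbf{r}^*$ drops out after the cancellation, and no nonautonomous machinery is needed. The bound $\dot V_c\le -\min(\lambda_{\min}(\mathbf{K}_1),\lambda_{\min}(\mathbf{K}_2))\,2V_c$ even yields global exponential stability.

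Finally, GAS has to be transferred back to $(\mathbf{z}_1,\mathbf{z}_2)$. The map $(\mathbf{z}_1,\mathbf{z}_2)\mapsto(\mathbf{z}_1,\mathbf{z}_2+\mathbf{K}_1\mathbf{z}_1)$ is a linear bijection fixing the origin, so it preserves both Lyapunov stability and global attractivity. I expect this to be routine. The only real subtlety is the singularity and domain caveat noted above, together with making sure the derivative $\dot{\boldsymbol{\phi}}$ is taken along the actual trajectories rather than the nominal ones.
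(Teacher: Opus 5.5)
Your proof is correct and is essentially the paper's argument: the same composite function $V_c$ (you write it in $(\mathbf{z}_1,\boldsymbol{\zeta})$, the paper writes it in $(\mathbf{z}_1,\mathbf{z}_2)$ with $\boldsymbol{\zeta}=\mathbf{z}_2+\mathbf{K}_1\mathbf{z}_1$ expanded), the same substitution of the control law, and Barbashin--Krasovskii via radial unboundedness. Your derivative $-\mathbf{z}_1^T\mathbf{K}_1\mathbf{z}_1-\boldsymbol{\zeta}^T\mathbf{K}_2\boldsymbol{\zeta}$ is the correct one, whereas the paper's display writes $-\mathbf{z}_1^T\mathbf{z}_1$ and drops $\mathbf{K}_1$ (harmless for the conclusion); your two further remarks are sound refinements the paper leaves implicit, namely that exact cancellation makes the closed loop LTI, which is what justifies the autonomous theorem even though $\mathbf{r}^*(t)$ makes the open-loop deviation dynamics time-varying, and that ``global'' must exclude trajectories passing through the primaries.
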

\begin{proof}
	Consider the Lyapunov candidate function
	\begin{equation}
		\label{eq:lyap_func}
		V_c(\mathbf{z}_1,\mathbf{z}_2) = \frac{1}{2}\mathbf{z}_1^T\mathbf{z}_1 + \frac{1}{2}(\mathbf{z}_2+\mathbf{K}_1\mathbf{z}_1)^T(\mathbf{z}_2+\mathbf{K}_1\mathbf{z}_1),
	\end{equation}
	which clearly satisfies $V_c(\mathbf{0},\mathbf{0})=0$ and $V_c(\mathbf{z}_1,\mathbf{z}_2)>0,$ ${\forall(\mathbf{z}_1,\mathbf{z}_2)\neq(\mathbf{0},\mathbf{0})}$, for $\mathbf{K}_1\succ0$.
	Then, differentiating and substituting $\mathbf{u}$ from \eqref{eq:NL_backstepping_controller_law} results in
	\begin{equation*}
		\begin{aligned}
			\dot{V}_c(\mathbf{z}_1,\mathbf{z}_2) &= -\mathbf{z}_1^T\mathbf{z}_1 - (\mathbf{z}_2+\mathbf{K}_1\mathbf{z}_1)^T\mathbf{K}_2(\mathbf{z}_2+\mathbf{K}_1\mathbf{z}_1),
		\end{aligned}
	\end{equation*}
	which satisfies ${\dot{V}_c(\mathbf{z}_1,\mathbf{z}_2)<0,}$ ${\forall(\mathbf{z}_1,\mathbf{z}_2)\neq(\mathbf{0},\mathbf{0})}$, for ${\mathbf{K}_2\succ 0}$, confirming that $V_c(\mathbf{z}_1,\mathbf{z}_2)$ is a Lyapunov function. Moreover, and since \eqref{eq:lyap_func} is radially unbounded,  \eqref{eq:NL_backstepping_controller_law} ensures GAS of the origin of \eqref{eq:z_1_z_2_dynamics}, following \cite[Theorem~4.2]{khalil2001NLSystems}.
\end{proof}

\subsection{Nonlinear Control Law in the ER3BP}
In parallel to the solution proposed for the CR3BP, a control law in the ER3BP is established by pursuing very similar steps. The result is summarized as follows, using equivalent notations to denote a nominal orbit and the corresponding spacecraft deviations in position and "velocity" -- the latter of which, in this case, are derivatives with respect to $\nu$.
\begin{corollary}
	The NL control law
	\begin{equation}
		\label{eq:NL_backstepping_controller_law_ER3BP}
		\mathbf{u} = -(\mathbf{I_3}+\mathbf{K}_2\mathbf{K}_1)\bar{\mathbf{z}}_1-(\mathbf{K}_1+\mathbf{K}_2)\bar{\mathbf{z}}_2 - \bar{\mathbf{f}}_a(\bar{\mathbf{z}}_1,\bar{\mathbf{z}}_2, \nu),
	\end{equation}
	with gain matrices $\mathbf{K}_1 \succ 0$ and $\mathbf{K}_2 \succ 0$, ensures (uniform) GAS of the origin of the ER3BP dynamics, given by
	\begin{equation*}
		\bar{\mathbf{z}}'_1 = \bar{\mathbf{z}}_2 \qquad \text{and} \qquad
		\bar{\mathbf{z}}'_2  = \bar{\mathbf{f}}_a(\bar{\mathbf{z}}_1,\bar{\mathbf{z}}_2,\nu) + \mathbf{u},
	\end{equation*}
	where $\bar{\mathbf{f}}_a(\bar{\mathbf{z}}_1,\bar{\mathbf{z}}_2, \nu)=\boldsymbol{\Omega}\bar{\mathbf{z}}_2  + \nabla \bar{U}(\bar{\mathbf{z}}_1+\bar{\mathbf{r}}^*,\nu) -\nabla \bar{U}(\bar{\mathbf{r}}^*,\nu)$.
\end{corollary}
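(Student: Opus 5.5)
The plan mirrors the proof of Proposition~\ref{prop:control_law_NL_CR3BP}, with $\nu$ playing the role of time and the extra care required by the non-autonomous setting. I will first substitute \eqref{eq:NL_backstepping_controller_law_ER3BP} into the ER3BP deviation dynamics. The term $\bar{\mathbf{f}}_a(\bar{\mathbf{z}}_1,\bar{\mathbf{z}}_2,\nu)$, which is the only place where $\nu$ enters, cancels exactly. The closed loop is then the linear time-invariant system
\begin{equation*}
\bar{\mathbf{z}}'_1 = \bar{\mathbf{z}}_2, \qquad \bar{\mathbf{z}}'_2 = -(\mathbf{I_3}+\mathbf{K}_2\mathbf{K}_1)\bar{\mathbf{z}}_1-(\mathbf{K}_1+\mathbf{K}_2)\bar{\mathbf{z}}_2 .
\end{equation*}
This is the same closed loop obtained in the CR3BP case, with derivatives taken with respect to $\nu$.

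Next I will take the same Lyapunov function \eqref{eq:lyap_func}, written in the barred variables:
\begin{equation*}
V_c(\bar{\mathbf{z}}_1,\bar{\mathbf{z}}_2)=\tfrac12\bar{\mathbf{z}}_1^T\bar{\mathbf{z}}_1+\tfrac12(\bar{\mathbf{z}}_2+\mathbf{K}_1\bar{\mathbf{z}}_1)^T(\bar{\mathbf{z}}_2+\mathbf{K}_1\bar{\mathbf{z}}_1).
\end{equation*}
It has no explicit dependence on $\nu$, is positive definite and is radially unbounded. It is therefore trivially decrescent, being bounded above and below by the same function independent of $\nu$. Differentiating along trajectories, exactly as in the proof of Proposition~\ref{prop:control_law_NL_CR3BP}, gives
\begin{equation*}
V_c' = -\bar{\mathbf{z}}_1^T\bar{\mathbf{z}}_1-(\bar{\mathbf{z}}_2+\mathbf{K}_1\bar{\mathbf{z}}_1)^T\mathbf{K}_2(\bar{\mathbf{z}}_2+\mathbf{K}_1\bar{\mathbf{z}}_1).
\end{equation*}
This is bounded above by $-W(\bar{\mathbf{z}}_1,\bar{\mathbf{z}}_2)$, where $W$ is a positive definite function that does not depend on $\nu$.

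To conclude uniform GAS I will invoke the non-autonomous Lyapunov theorem (Khalil, Theorem~4.9), which asks for $W_1\le V\le W_2$ with $W_1$ radially unbounded and $V'\le -W_3$. With the bounds above these hypotheses hold globally. Alternatively, since the closed loop is LTI, GAS of the origin already implies uniformity, and Theorem~4.2 of \cite{khalil2001NLSystems} applies directly.

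The main obstacle is justifying the setup rather than the Lyapunov computation. The independent variable is $\nu$ rather than time, so I must check that $\nu$ is strictly increasing and unbounded in time. This holds because $\dot\nu>0$ along a Keplerian orbit, so convergence in $\nu$ is equivalent to convergence in time. I must also check that $\nabla\bar U$ is well defined along trajectories, which requires avoiding the primaries. Under exact cancellation I will take $\bar{\mathbf{f}}_a$ as known and evaluable, as in the CR3BP result.
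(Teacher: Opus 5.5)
Your proposal is correct and takes essentially the paper's route: the same $\nu$-independent Lyapunov function \eqref{eq:lyap_func} and derivative, with Khalil's Theorem~4.9 giving uniform GAS; your remarks that the closed loop is LTI and that $\dot\nu>0$ with $\nu$ unbounded are sound additions. One small slip, which you inherited from the displayed formula in the proof of Proposition~\ref{prop:control_law_NL_CR3BP}: if you actually carry out the differentiation, the cross terms $\bar{\mathbf{z}}_1^T\bar{\mathbf{z}}_2$ cancel and leave $V_c' = -\bar{\mathbf{z}}_1^T\mathbf{K}_1\bar{\mathbf{z}}_1-(\bar{\mathbf{z}}_2+\mathbf{K}_1\bar{\mathbf{z}}_1)^T\mathbf{K}_2(\bar{\mathbf{z}}_2+\mathbf{K}_1\bar{\mathbf{z}}_1)$, so the first term carries $\mathbf{K}_1$; this is still negative definite, so your conclusion is unaffected.
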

\begin{proof}
	The proof is identical to that of Proposition~\ref{prop:control_law_NL_CR3BP}, since the equivalent Lyapunov function and its derivative are independent of $\nu$ -- a trivial case of Lyapunov's analogous theorem for non-autonomous systems \cite[Theorem~4.9]{khalil2001NLSystems}.
\end{proof}

Note that control law \eqref{eq:NL_backstepping_controller_law_ER3BP} depends explicitly on the respective independent variable, $\nu$, whereas \eqref{eq:NL_backstepping_controller_law} does not, reflecting the non-autonomous nature of the ER3BP. Whether this apparent increase in complexity is of concern to the controller operation is evaluated via numerical trials in Section~\ref{sec:results}.

\subsection{Remarks}
\label{subsec:obs}

In principle, any positive definite gain matrices $\mathbf{K}_1$ and $\mathbf{K}_2$ are equally valid for the assurance of GAS. However, to better study how these matrices may be tailored to fulfill particular mission requirements, this work considers the simpler case where they are multiples of the identity matrix, i.e. $\mathbf{K}_1=k_1\mathbf{I_3}$ and $\mathbf{K}_2=k_2\mathbf{I_3}$, with $k_1>0$ and $k_2>0$.

To predict how $k_1$ and $k_2$ may impact the behavior of the proposed control laws, it is pertinent to separate their linear and NL terms. In the case of \eqref{eq:NL_backstepping_controller_law}, for example, one may write
\begin{equation*}
	\mathbf{u} = -\mathbf{u}_{\mathrm{lin}}-\mathbf{f}_a, \qquad \text{with} \qquad  \mathbf{u}_{\mathrm{lin}} = (1+k_1 k_2)\left(\mathbf{z}_1 + \kappa \mathbf{z}_2\right),
\end{equation*}
where $\kappa = \frac{k_1 + k_2}{1+ k_1 k_2}$.
Doing so clearly evidences that the values of $k_1$ and $k_2$ may be freely interchanged and still result in the exact same control command, therefore leading to an equal system response.
Furthermore, by studying the behavior of $\kappa$, it is possible to assess how the gains impact the relative weight of the position and velocity deviations towards $\mathbf{u}_\mathrm{lin}$ (and thus $\mathbf{u}$).
To this end, taking both gains to be either very large or very small predominantly penalizes position deviations over those in velocity, since ${k_1\to0,~k_2\to 0 \implies \kappa \to 0}$ and ${k_1\to\infty,~k_2\to \infty \implies \kappa \to 0}$. In contrast, if the magnitude of the two gains differs significantly, the result is a heavier penalization of deviations in velocity, given that ${k_1\to 0,~k_2\to \infty \implies \kappa \to \infty}$. From a practical point of view, however, opting towards this choice may prove naive, given how the velocity deviation states correspond to direct derivatives of the position deviations, thus naturally approaching zero when the spacecraft is spatially driven towards the nominal orbit. This means that a heavier penalization of position deviations may actually bring those in velocity faster to zero -- an hypothesis later confirmed numerically in Section~\ref{sec:results}. Notice however that, while selecting the control gains to adjust how the position and velocity deviations are weighed against each other is possible, the overall control command \textit{intensity}, i.e.  in terms of its magnitude, cannot be freely tuned simultaneously.

\section{Numerical Results}
\label{sec:results}

To evaluate the proposed solution, a target orbit in the CR3BP (C1) and another in the ER3BP (E1) are selected. C1 was retrieved from the extensive catalog in \cite{jpl3BPorbits} and belongs to the L2 Halo family, often selected for near-Moon missions. C2 is an equivalent double revolution L2 orbit derived iteratively from C1, following the steps in \cite{ferrari2018ER3BP}.
These may be obtained through the numerical integration of the appropriate dynamics, considering the dimensionless initial conditions 
${\mathbf{x}_0^T\approx\begin{bmatrix}
	1.1438 & 0 & -0.1575 & 0 & -0.2219 & 0
\end{bmatrix}}$
and
${\bar{\mathbf{x}}_0^T\approx\begin{bmatrix}
	1.1452 & 0 & -0.1609 & 0 & -0.2209 & 0
\end{bmatrix}}$, respectively.
The orbital period and Poincaré exponents of each orbit are given in Table~\ref{tab:test_cases_stab}, where $j$ denotes the imaginary unit. Note the existence of a single unstable mode in both orbits, whose magnitude reveals high instability, making these trajectories ideal candidates for evaluating the adequacy of the proposed solution under extreme-case scenarios.

\begin{table}[h]
	\renewcommand*{\arraystretch}{1.2}
	\caption{Stability properties and period of the nominal orbits.}
	\label{tab:test_cases_stab}
	\begin{center}
		\begin{tabular}{c c c}
			\hline
			Orbit& Poincaré Exponents & Orbital Period\\ \hline
			C1 & $(\pm 1.607,~ \pm 0.572j,~ 0,~ 0)$ & $\sim13.9$ days\\
			E1 & $(\pm 1.609,~ \pm 0.430j,~ \pm 0.004j)$ & $\sim27.9$ days\\ \hline
		\end{tabular}
	\end{center}
\end{table}

The test-case under analysis considers an initial deviation ${\mathbf{z}_0^T = [(300,-300,300) ~\text{km},~(-0.5,0.5,-0.5)~\text{m s}^{-1}]}$ from each orbit, which the controller aims to eliminate. The dynamics are simulated by numerical integration of the EoM through a 9$^{\text{th}}$ order adaptive-step Runge-Kutta method, with absolute and relative error tolerances set to $10^{-12}$. Note however that the control law is independent to this process. In fact, since all of its terms may be evaluated analytically, the proposed strategy maintains a computationally light profile.

To quantitatively evaluate the controller performance, actuation effort benchmarks are defined over the first orbit as
\begin{equation*}
	\mathcal{E}_{v}= \int_{0}^{T} \norm{\mathbf{u}(t)} dt \qquad \text{and} \qquad \mathcal{E}_{e} = \int_{0}^{T} \norm{\mathbf{u}(t)}^2 dt,
\end{equation*}
where $T$ is the corresponding orbital period. 
These provide measures for the velocity variation enacted by the actuators during that time and the corresponding energy expenditure, respectively. In addition, to assess the timescale of the responses, let $t_{m}$ denote the time instant when the position deviation is within meter-level range, i.e. $\norm{\mathbf{z}_1}<10~\text{m}$.

\subsection{Application in the CR3BP}
\label{subsec:app_CR3BP}

The results stemming from the application of the NL control law \eqref{eq:NL_backstepping_controller_law} in the CR3BP are presented in Fig.~\ref{fig:CR3BP_resp}. These are given in terms of the magnitudes of the position deviation, velocity deviation, and control command, over the course of the first 7 days of operation. For the sake of clarity, these quantities are scaled back to physical units by reverting the nondimensionalization steps outlined in Section~\ref{sec:dyn}.To evaluate the effect of the control gains, each plot provides the responses under four different choices of $(k_1,k_2)$. A detailed view of the initial control commands is provided to evidence the subtle differences between responses. Recall from the discussion in Section~\ref{subsec:obs} that $k_1$ and $k_2$ may be freely interchanged without consequence, hence why it is most relevant to assess them from a relative perspective.

The analysis of Fig.~\ref{fig:CR3BP_resp} confirms the hypotheses postulated in Section~\ref{subsec:obs}. On the one hand, it shows that for two control gains of similar magnitude, e.g. $(k_1,k_2)=(10,10)$, the position deviations are brought faster to zero. In contrast, if the gains differ significantly, e.g. $(k_1,k_2)=(0.1,5)$, the velocity deviation is most contained, given the larger contribution of this variable towards the control command. However, as predicted, this does not imply a faster convergence of $\mathbf{z}_2$ towards zero. In fact, while choosing $k_1=k_2$ predominantly penalizes deviations in position, which results in a considerable initial overshoot in $\mathbf{z}_2$ (especially for large gains), note that its magnitude under such a gain selection eventually goes below that of the opposing alternatives. As explored, this is a direct consequence of the integral relation between position and velocity, i.e. that the elimination of $\mathbf{z}_1$ results in eliminating also $\mathbf{z}_2$. On the contrary, the test-cases where deviations in velocity are more heavily penalized show a slow convergence of $\mathbf{z}_1$ towards zero, which is reflected in an incomplete elimination of $\mathbf{z}_2$ -- an effect entirely opposite to what may have been initially conceptualized.

To further underline the objective consequences of the control gain selection, Table~\ref{tab:efforts_CR3BP} provides the total effort benchmarks and timescales of the responses in Fig.~\ref{fig:CR3BP_resp}, as defined in Section~\ref{sec:results}. Once more, the results highlight the relevance of prioritizing the elimination of deviations in position. To illustrate this point, notice how the large timescale of the response with ${(k_1,k_2)=(0.1,5)}$ is ultimately responsible for a larger value of the control effort benchmark measuring the total velocity variation, $\mathcal{E}_v$, despite the relatively small command magnitude throughout. The analysis of the total control energy expenditure, $\mathcal{E}_{e}$, leads to a similar conclusion. In addition, it is possible to see an increase in both benchmarks as the control gains are made larger, due to the more vehement initial command.

\begin{figure}[t]
	\centering
	\includegraphics[width=0.98\linewidth, trim={0.3cm 0cm 1cm 0.7cm}, clip]{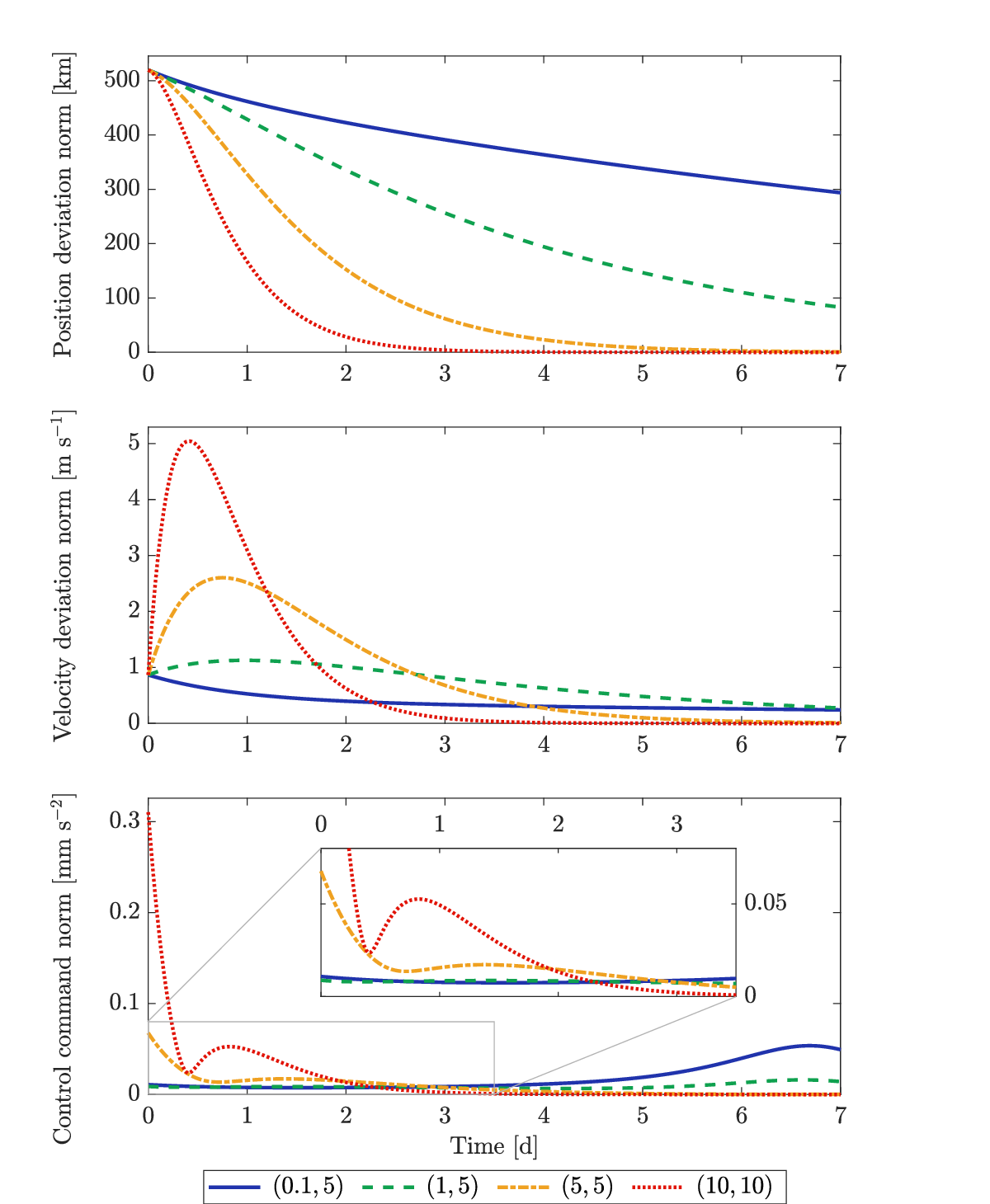}
	\caption{System responses under the CR3BP dynamics and the NL control law, considering various choices for the control gain pair $(k_1, k_2)$, over the first 7 days of simulation.}
	\label{fig:CR3BP_resp}
\end{figure}

\begin{table}[h]
	\renewcommand*{\arraystretch}{1.2}
	\caption{Control effort benchmarks over first orbital period and timescale of the responses in Fig.~\ref{fig:CR3BP_resp}.}
	\label{tab:efforts_CR3BP}
	\begin{center}
		\begin{tabular}{c c c c c c}
			\hline
			& \multicolumn{4}{c}{$(k_1,k_2)$} & \\
			\cline{2-5}
			& $(0.1,5)$ & $(1,5)$ & $(5,5)$ & $(10,10)$ &\\ \hline
			$\mathcal{E}_v$ & 15.46 & 6.732 & 5.311 & 9.788 & $(\times 10^{-3})$\\
			$\mathcal{E}_e$ & 43.74 & 6.262 &  11.59 & 96.83 & $(\times 10^{-11})$\\
			$t_m$ & $>T$ & $>T$ & 10.41 & 5.802 & (days)\\ \hline
		\end{tabular}
	\end{center}
\end{table}

Given these considerations, unless an upper-bound restriction to the velocity deviation needs to be met, there is no benefit to penalizing primarily $\mathbf{z}_2$. The options with $k_1\sim k_2$ are shown to yield more satisfactory results not only qualitatively, but also objectively, as measured through the established control effort benchmarks. If a faster response is required, both control gains may be increased in unison. This should however be carefully leveled with the sacrifice in terms of energy expenditure, as diminishing returns are observed when the control gains are increased beyond $(k_1,k_2)=(5,5)$, for the test-case presented.

Fig.~\ref{fig:CR3BP_orbit} provides the trajectory of the controlled spacecraft, considering ${(k_1,k_2)=(5,5)}$. To underline the necessity for control action, the path described by an uncontrolled spacecraft is also presented, which is seen to quickly depart from the nominal orbit due to its unstable nature. In addition, a detailed $X$--$Y$ view of the first and final instants of operation evidences the elimination of the initial deviation. The results illustrate how the proposed orbital station-keeping solution is adequate at tracking the Halo orbit considered, with the position deviation falling within meter-level range even before a full orbit is completed, in accordance with Table~\ref{tab:efforts_CR3BP}.

\begin{figure}[t]
	\centering
	\includegraphics[width=\linewidth, trim={0.5cm 0cm 0.6cm 0.9cm}, clip]{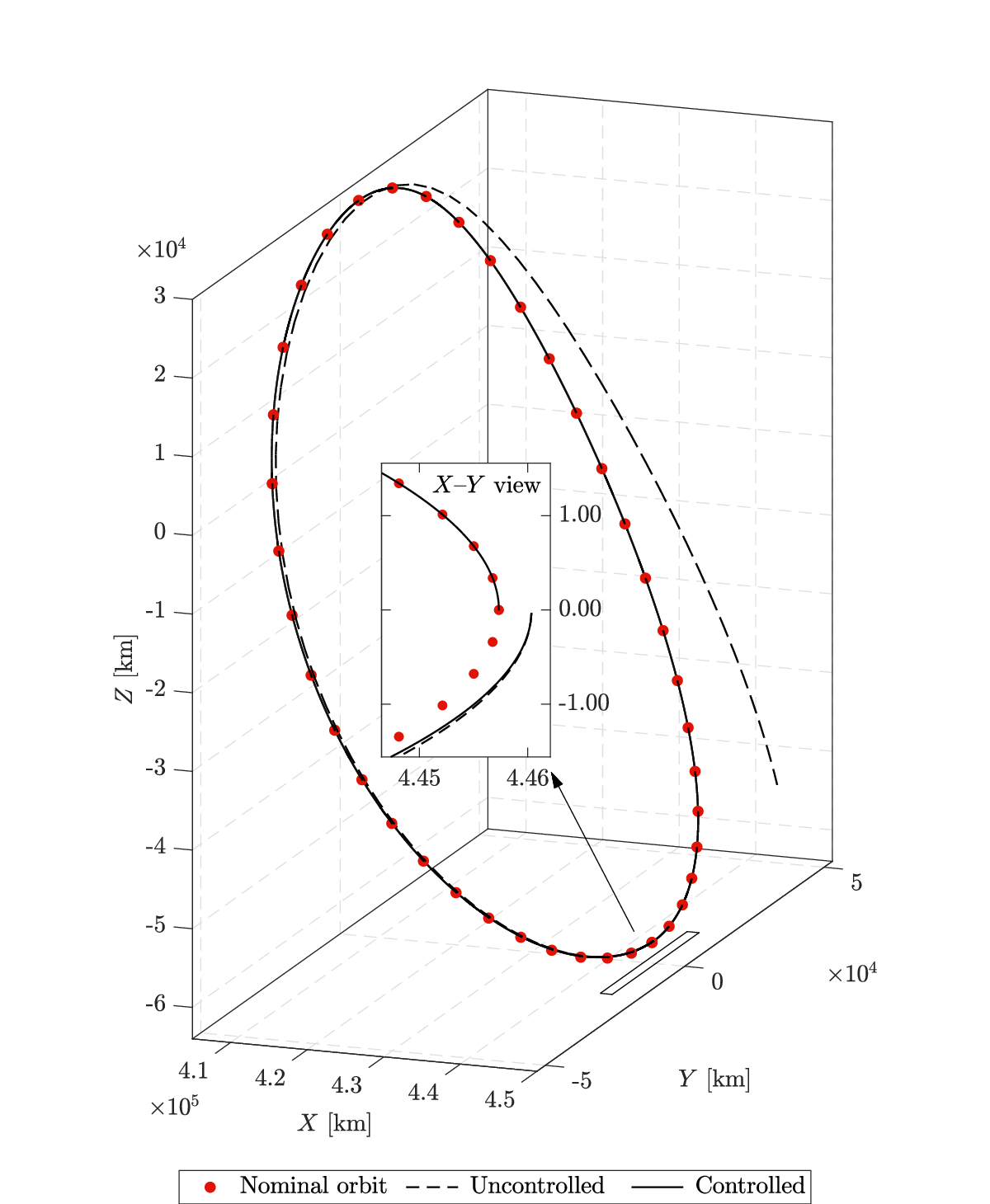}
	\caption{Orbital station-keeping in the CR3BP via the NL control law with ${(k_1,k_2)=(5,5)}$. Plotted against the trajectory of an uncontrolled spacecraft. The $X$--$Y$ view details the initial and final instants of simulation.}
	\label{fig:CR3BP_orbit}
\end{figure}

\subsection{Application in the ER3BP}

The application of the NL control law \eqref{eq:NL_backstepping_controller_law_ER3BP} under the ER3BP dynamics leads to the system responses provided in Fig.~\ref{fig:ER3BP_resp}. Once more, the magnitudes of the position deviation, velocity deviation, and actuation command are provided for the first 7 days of operation. However, the change of independent variable discussed in Section~\ref{subsec:er3bp} means further attention is required when scaling these quantities into physical units. Most notably, a time-based representation requires that "velocity" deviations are multiplied by the varying angular speed of the reference frame to ensure they represent time-derivatives. Furthermore, Kepler's equation is used to transform $\nu$ into $t$, which may be written as \cite{wiesel2010SpaceflightDynamics}
\begin{equation}
	\label{eq:kep}
	t-t_0 = E- e\sin E, \quad \text{with} \quad \tan \frac{E}{2}= \sqrt{\frac{1-e}{1+e}}\tan \frac{\nu}{2},
\end{equation}
given that the dimensionless mean angular velocity of the primaries is equal to unity.
The primaries are assumed to start at \textit{periapsis}, i.e. with ${\nu\vert_{t=0}=0}$, such that one has $t_0=0$. 

\begin{figure}[t]
	\centering
	\centering
	\includegraphics[width=0.98\linewidth, trim={0.3cm 0cm 1cm 0.7cm}, clip]{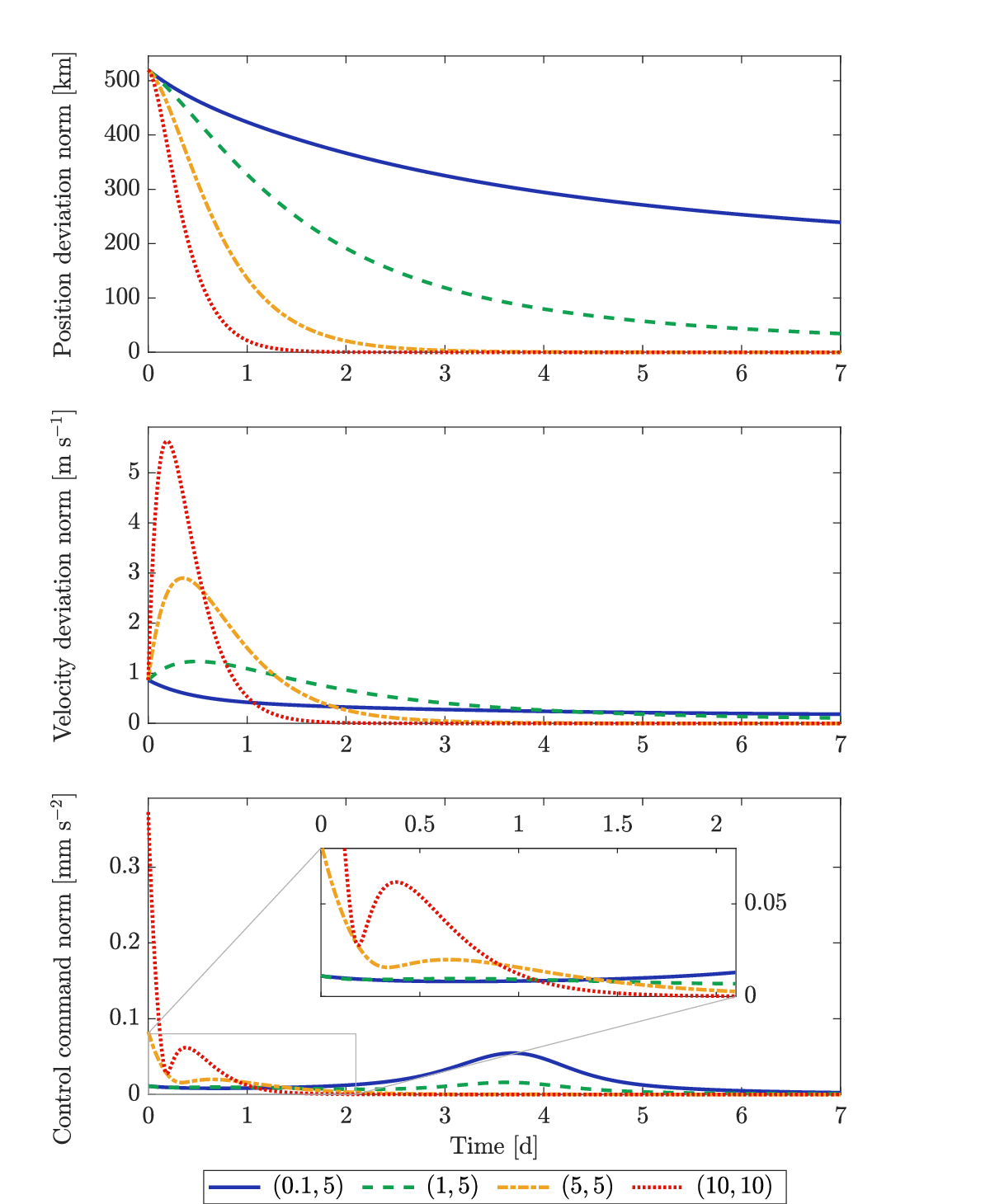}
	\caption{System responses under the ER3BP dynamics and the NL control law, considering various choices for the control gain pair $(k_1, k_2)$, over the first 7 days of simulation.}
	\label{fig:ER3BP_resp}
\end{figure}

By selecting the same control gain pairs considered during the study developed in Section~\ref{subsec:app_CR3BP}, for the CR3BP,  it is possible to directly assess if the more complex, non-autonomous dynamics of the ER3BP have any meaningful effect towards the adequacy of the proposed control solution. To this end, Table~\ref{tab:efforts_ER3BP} provides the corresponding control effort benchmarks and timescales of the responses in Fig.~\ref{fig:ER3BP_resp}.

\begin{table}[!h]
	\renewcommand*{\arraystretch}{1.2}
	\caption{Control effort benchmarks over first orbital period and timescale of the responses in Fig.~\ref{fig:ER3BP_resp}.}
	\label{tab:efforts_ER3BP}
	\begin{center}
		\begin{tabular}{c c c c c c}
			\hline
			& \multicolumn{4}{c}{$(k_1,k_2)$} & \\
			\cline{2-5}
			& $(0.1,5)$ & $(1,5)$ & $(5,5)$ & $(10,10)$ &\\ \hline
			$\mathcal{E}_v$ & 17.15 & 4.721 & 2.994 & 5.391 & $(\times 10^{-3})$\\
			$\mathcal{E}_e$ & 38.47 & 4.280 &  7.796 & 63.87 & $(\times 10^{-11})$\\
			$t_m$ & $>T$ & $>T$ & 7.395 & 3.007 & (days)\\ \hline
		\end{tabular}
	\end{center}
\end{table}

From the analysis of Fig.~\ref{fig:ER3BP_resp} and Table~\ref{tab:efforts_ER3BP} one comes to the conclusion that the overall behavior of the proposed solution under the ER3BP is analogous to the results previously found for the CR3BP. In fact, the same observations on the effect of the control gain choice may be made, as the system responses and control effort benchmarks suggest towards opting for gains of similar magnitude, which may be increased simultaneously for a faster response, albeit at an increase in energy expenditure. Furthermore, it is possible to observe that, for the same control gain pairs, the proposed solution seems to bring the deviations in the ER3BP faster to zero when compared with the CR3BP results previously presented in Table~ \ref{tab:efforts_CR3BP}, especially for larger gains. This is however a mere consequence of the non-autonomous dynamics of the ER3BP and the NL relation between time and true anomaly, since $\nu$ varies faster with time when the primaries are near periapsis, according to Kepler's equation \eqref{eq:kep}. Notice also the slightly larger overshoot in $\mathbf{z}_2$ and more demanding initial actuation command, due to the $\nu$-dependence of the control law \eqref{eq:NL_backstepping_controller_law_ER3BP} -- events that are nonetheless relatively contained. Ultimately, this analysis brings confirmation to the versatility of the backstepping methodology employed in this work.

Fig.~\ref{fig:ER3BP_orbit} provides a spatial representation of the nominal orbit tracking under the control solution with ${(k_1,k_2)=(5,5)}$, in comparison to the trajectory of an uncontrolled spacecraft. Once more, this visually confirms the adequacy of the proposed solution at tracking the nominal orbit in the ER3BP, despite its more complex double-revolution winding.

\begin{figure}[h]
	\centering
	\includegraphics[width=\linewidth, trim={0.5cm 0cm 0.6cm 0.9cm}, clip]{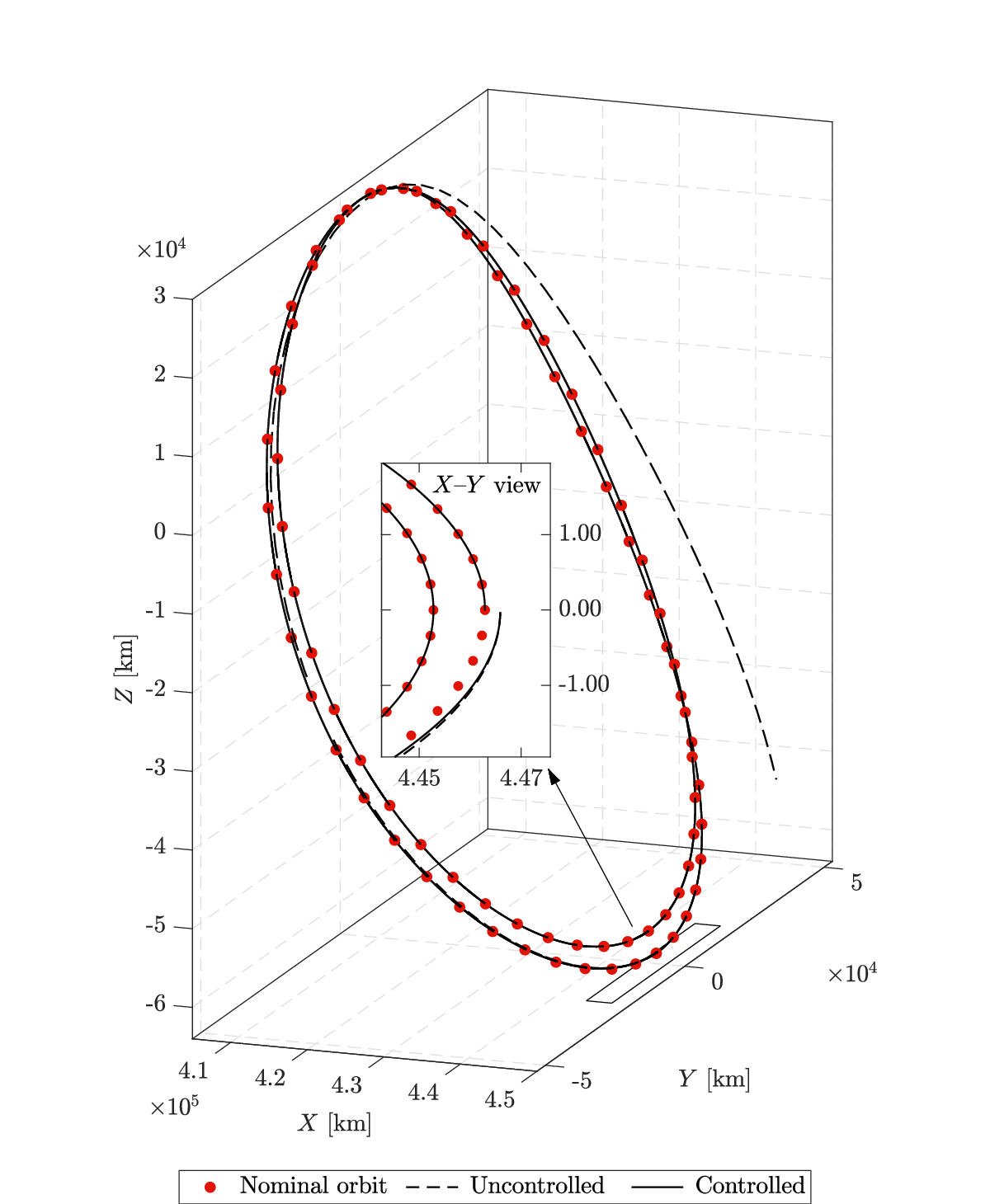}
	\caption{Orbital station-keeping in the ER3BP via the NL control law with ${(k_1,k_2)=(5,5)}$. Plotted against the trajectory of an uncontrolled spacecraft. The $X$--$Y$ view details the initial and final instants of simulation.}
	\label{fig:ER3BP_orbit}
\end{figure}
\section{Concluding Remarks}
\label{sec:conc}

This work covers the design of a NL backstepping control solution for the orbital station-keeping problem in the CR3BP and ER3BP, which presents a very light numerical profile for enhanced applicability in onboard computers. The proposed control solution is shown to provide formal guarantees for GAS under the full NL dynamics, ensuring that any deviation from a given target trajectory is asymptotically driven to zero over time. Numerical trials are carried out to evaluate the adequacy of the proposed solution and validate theoretical postulations made on the effects of the two control gains that describe it. The results suggest towards selecting control gains of similar magnitude for an ideal balance between total velocity variation, energy expenditure, and timescale of the response. They also demonstrate similar and precise tracking of closed solutions in both R3BP dynamical models studied, highlighting the adaptability of the proposed strategy, when compared to other alternatives in literature.

Future work shall assess the effect of external perturbations, e.g. due to other massive celestial bodies exerting relevant force on the spacecraft, as well as sensor/actuator noise and other error sources, which may be addressed through possible extensions via dynamic surface control or command filtered backstepping. In addition, the formal inclusion of physical constraints should be studied, namely those at the actuator-level, such as actuation saturation -- a concern that is often faced by real missions making recourse of low-propulsion electric systems to enact continuous control.

%



\bibliographystyle{IEEEtran}
\bibliography{references}

\end{document}